\documentclass[11pt]{article}
\usepackage{epsfig,amsmath,amsthm,latexsym,graphicx,amsfonts,amssymb}
\addtolength{\textwidth}{30mm} \addtolength{\oddsidemargin}{-15mm}
\addtolength{\textheight}{40mm} \addtolength{\topmargin}{-15mm}
\setlength{\parskip}{5mm} \setlength{\parindent}{0mm} \sloppy
\newcommand{\beq}{\begin{equation}}
\newcommand{\eeq}{\end{equation}}

\newcommand{\R}{\mathbb R}

\newcommand{\E}{\mathbb E}

\newcommand{\htheta}{\hat{\theta}}
\newcommand{\An}{\R^n_{>}}

\newcommand{\defeq}{\stackrel{\mbox{\tiny{def}}}{=}}
\newtheorem{theorem}{Theorem}
\newtheorem{corol}[theorem]{Corollary}
\newtheorem{lemma}[theorem]{Lemma}

\newtheorem{remark}{Remark}

\begin{document}

\title{SMML estimators for 1-dimensional continuous data}
\author{James G. Dowty}
\date{\today}

\maketitle

\abstract{A method is given for calculating the strict minimum message length (SMML) estimator for 1-dimensional exponential families with continuous sufficient statistics.  A set of $n$ equations are found that the $n$ cut-points of the SMML estimator must satisfy.  These equations can be solved using Newton's method and this approach is used to produce new results and to replicate results that C. S. Wallace obtained using his boundary rules for the SMML estimator.  A rigorous proof is also given that, despite being composed of step functions, the posterior probability corresponding to the SMML estimator is a continuous function of the data.}

\section{Introduction}

The minimum message length (MML) principle \cite{wallace_boulton} is an information theoretic criterion that links data compression with statistical inference \cite{wallace}.  It has a number of useful properties and it has close connections with Kolmogorov complexity \cite{wallace_dowe}.  Using the MML principle to construct estimators is known to be NP-hard in general \cite{farr} so it is common to use approximations in practice \cite{wallace}.  The term `strict minimum message length' (SMML) is used to distinguish the exact MML criterion from these approximations.

The only known algorithm for calculating an SMML estimator is Farr's algorithm \cite{farr} which applies to data taking values in a finite set which is (in some sense) 1-dimensional.  For 1-dimensional continuous data, certain rules of thumb called {\em boundary rules} can sometimes be used for calculating the SMML estimator \cite{wallace}.  However, these rules were derived from a heuristic criterion and are not in general satisfied by the SMML estimator.  Therefore the calculation of the SMML estimator, even in the simple case of 1-dimensional continuous data, is an open problem.

This paper gives a method for calculating the SMML estimator for
a 1-dimensional exponential family of statistical models with a continuous sufficient statistic.
Section \ref{S:review} recalls the relevant definitions and fixes our notation.
Our main results appear in Section \ref{S:construction}, where we give equations that the cut-points of the SMML estimator must satisfy, show how to solve these equations with Newton's method and prove a previously unknown fact about the SMML estimator.  These results are based on certain technical lemmas whose proofs are deferred to Appendix \ref{S:proofs}.  We then apply the results of Section \ref{S:construction} to examples (in Sections \ref{S:normal} and \ref{S:exp}) before addressing some numerical issues (in Section \ref{S:numerical}).  Section \ref{S:conclusion} states our main conclusions and discusses some ideas for further research.

\section{The SMML estimator}
\label{S:review}

In order to define our notation, this section briefly recalls the definition of the SMML estimator for a 1-dimensional exponential family of statistical models with a continuous sufficient statistic.

Let the exponential family have support $\mathcal{X}$ and natural parameter space $\Theta$ and assume that both are open, connected subsets of $\R$.  For each $\theta \in \Theta$, let $f(x | \theta)$ be the probability density function (PDF) on $\mathcal{X}$ given by
\begin{equation}\label{E:exp_family}
f(x | \theta) \defeq \exp(x \theta - \psi(\theta)) h(x)
\end{equation}
for any $x \in \mathcal{X}$, where $\psi$ and $h$ are given functions with $h$ strictly positive everywhere on $\mathcal{X}$.  If $\pi$ is a Bayesian prior on $\Theta$ then we define the {\em marginal PDF} $r$ to be given by
$$ r(x) \defeq \int_\Theta \pi(\theta) f(x | \theta) d\theta $$
for any $x \in \mathcal{X}$, and $r(x)= 0$ elsewhere.  We make the technical assumption that the first moment of $r$ exists.

For the 1-dimensional case considered above, the SMML estimator with $n$ cut-points is defined as follows \cite{wallace}.  Suppose we are given an integer $n \ge 1$ and real numbers $a_1 < \ldots < a_n$ in $\mathcal{X}$ (the cut-points) as well as $\htheta_0, \htheta_1, \ldots, \htheta_n \in \Theta$ (the assertions) and $q_0, q_1, \ldots q_n \in \R$ (the coding probabilities for the assertions) so that $1 = q_0+ q_1+ \ldots +q_n$ and each $q_i > 0$.  Then for each $i = 0, 1, \ldots, n$, define $U_i$ to be the interval $U_i = [a_i,a_{i+1})$ where $a_0$ and $a_{n+1}$ are the boundaries of $\mathcal{X}$, e.g. if $\mathcal{X} = \R$ then $a_0 = -\infty$ and $a_{n+1} = \infty$.  Let $\htheta$ and $q$ be the step functions given by $\htheta(x) \defeq \htheta_i$ and $q(x) \defeq q_i$ where $i$ is the unique integer for which $x \in U_i$.  If we discretize the data space $\mathcal{X}$ to a lattice then there is a $2$-part coding of the data which has expected length
\begin{equation}\label{E:I1}
I_1 = - \E[ \log(q(X) f(X | \htheta(X) ))]
\end{equation}
plus a constant which only depends on the width of the lattice, where $X$ is a random variable with PDF $r$, written $X \sim r$.  Then an SMML estimator with $n$ cut-points is a function $\htheta(x)$ which minimizes $I_1$ out of all estimators of this form.

This minimality condition can be used to solve for the assertions and the coding probabilities in terms of the cut-points.  Let $\mu: \Theta \to \R$ be the function $\mu(\theta) \defeq \int_\R x f(x | \theta) dx$ which relates the natural parametrization of the exponential family to the expectation parametrization.  By a standard result for exponential families (e.g. see Theorem 2.2.1 of \cite{kass}), $\psi$ is infinitely differentiable, $\mu = \psi^\prime$ and $\mu$ has an infinitely differentiable inverse.  Then it is not too hard to show (see R2 and R3 on pages 155-156 and 168-169 of \cite{wallace}), for each $i = 0, 1, \ldots, n$, that
\begin{equation}\label{E:q}
q_i = \int_{U_i} r(x) dx
\end{equation}
and
\begin{equation}\label{E:htheta}
\htheta_i = \mu^{-1}\left( \frac{1}{q_i} \int_{U_i} x r(x) dx \right).
\end{equation}
So (\ref{E:q}) says that $q_i$ is the mass of $U_i$ and (\ref{E:htheta}) says that the centre of mass of $U_i$ is the expectation parameter corresponding to $\htheta_i$.

Note that an SMML estimator with $n$ cut-points might not exist or might not be unique in general.  However, we will often refer to `the' SMML estimator when discussing this estimator informally.

\section{Constructing the SMML estimator}
\label{S:construction}

This section describes our construction of the SMML estimator.  This construction is given in terms of the natural parametrization of the exponential family but this determines the SMML estimator in general since this estimator transforms simply under reparametrization.

Using (\ref{E:q}), (\ref{E:htheta}) and the fact that $U_i = [a_i,a_{i+1})$, we can consider $q_i$ and $\htheta_i$ to be functions of the cut-points $a \defeq (a_1, \ldots, a_n) \in \R^n$.  Then $I_1$ becomes a function solely of $a$ and each SMML estimator with $n$ cut-points corresponds to a value of $a \in \R^n$ which minimizes this function $I_1(a)$.  But $r$ is continuous so $q_i(a)$ and $\htheta_i(a)$ are continuously differentiable functions of $a$, hence so is $I_1(a)$ by (\ref{E:I1_finitesum}) below.  Then since $I_1(a)$ is defined on the open subset
$$\An \defeq \{ a \in \mathcal{X}^n \mid a_1 < \ldots < a_n \}$$
of $\R^n$, its gradient vanishes at its minimum (if a minimum exists, i.e. if an SMML estimator with $n$ cut-points exists).  For each $j = 1, \ldots, n$ we therefore have an equation
\begin{equation}\label{E:deriv_simple}
    \frac{\partial I_1}{\partial a_j} = 0
\end{equation}
which is satisfied at any $a\in \An$ corresponding to an SMML estimator.  These $n$ equations can then be used to solve for the $n$ unknowns $a_1, \ldots, a_n$, giving the corresponding SMML estimator by (\ref{E:q}) and (\ref{E:htheta}).

The next lemma therefore calculates the partial derivatives which appear in (\ref{E:deriv_simple}).

\begin{lemma}
\label{L:grad}
Let $q_i$ and $\htheta_i$ be the functions of $a\in \An$ given by (\ref{E:q}) and (\ref{E:htheta}) and let $C$ be the constant $- \int_\mathcal{X} r(x) \log h(x) dx$.  Then
\begin{equation}\label{E:I1_finitesum}
I_1(a) = C - \sum_{i=0}^n  q_i \left( \log q_i + \htheta_i \mu(\htheta_i) - \psi(\htheta_i) \right)
\end{equation}
and, for $j = 1, \ldots, n$,
\begin{equation}\label{E:deriv}
\frac{\partial I_1}{\partial a_j} =
r(a_j) \log \left(  \frac{q_j f(a_j | \htheta_j)}{q_{j-1} f(a_j | \htheta_{j-1})} \right).
\end{equation}
\end{lemma}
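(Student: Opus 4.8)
The plan is to prove the two formulas in turn, first establishing the closed form (\ref{E:I1_finitesum}) for $I_1(a)$ and then differentiating it to obtain (\ref{E:deriv}). For (\ref{E:I1_finitesum}) I would begin from the definition (\ref{E:I1}), write the expectation as an integral against $r$, and split it over the intervals $U_i$ on which $q$ and $\htheta$ are constant, giving
$$ I_1 = -\sum_{i=0}^n \int_{U_i} r(x) \log\bigl(q_i f(x|\htheta_i)\bigr)\, dx. $$
Substituting the exponential-family form (\ref{E:exp_family}) gives $\log(q_i f(x|\htheta_i)) = \log q_i + x\htheta_i - \psi(\htheta_i) + \log h(x)$, so each integral separates into four pieces. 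Using (\ref{E:q}) to recognise $\int_{U_i} r = q_i$ and (\ref{E:htheta}) in the form $\int_{U_i} x\, r(x)\, dx = q_i \mu(\htheta_i)$, the first three pieces collapse to $q_i(\log q_i + \htheta_i \mu(\htheta_i) - \psi(\htheta_i))$, while the $\log h$ pieces reassemble across all $i$ into $\int_{\mathcal X} r \log h = -C$, yielding (\ref{E:I1_finitesum}) directly.

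For (\ref{E:deriv}) I would differentiate the finite sum. Writing $m_i \defeq \int_{U_i} x\, r(x)\, dx = q_i \mu(\htheta_i)$, each summand $T_i \defeq q_i \log q_i + \htheta_i m_i - q_i \psi(\htheta_i)$ depends on $a$ only through $q_i$ and $m_i$, since $\htheta_i = \mu^{-1}(m_i/q_i)$. Because $U_i = [a_i, a_{i+1})$, the cut-point $a_j$ enters only the two intervals $U_{j-1}$ and $U_j$, and the Leibniz rule gives $\partial q_{j-1}/\partial a_j = r(a_j) = -\partial q_j/\partial a_j$ together with $\partial m_{j-1}/\partial a_j = a_j r(a_j) = -\partial m_j/\partial a_j$. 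Thus it remains only to compute $\partial T_i/\partial q_i$ and $\partial T_i/\partial m_i$ and apply the chain rule.

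The key step, and the one I expect to carry the weight of the argument, is that the contributions to these partials coming from the dependence of $\htheta_i$ on $q_i$ and $m_i$ cancel. Using $\psi' = \mu$ and the identity $m_i = q_i \mu(\htheta_i)$, a short calculation gives the clean expressions $\partial T_i/\partial q_i = \log q_i + 1 - \psi(\htheta_i)$ and $\partial T_i/\partial m_i = \htheta_i$, with all $\partial \htheta_i/\partial q_i$ and $\partial \htheta_i/\partial m_i$ terms vanishing. This is an envelope-type cancellation reflecting the fact that $\htheta_i$ is chosen precisely to match the moment $m_i/q_i$, and it is what makes the final expression so simple.

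Finally, applying the chain rule to $\partial I_1/\partial a_j = -(\partial T_{j-1}/\partial a_j + \partial T_j/\partial a_j)$ and substituting the boundary derivatives above, the constant and the $\log h(a_j)$ contributions cancel between the two intervals, leaving $r(a_j)[\log(q_j/q_{j-1}) + (a_j\htheta_j - \psi(\htheta_j)) - (a_j\htheta_{j-1} - \psi(\htheta_{j-1}))]$. Recognising $a_j\htheta_i - \psi(\htheta_i) = \log f(a_j|\htheta_i) - \log h(a_j)$ from (\ref{E:exp_family}) then converts the bracket into $\log(q_j f(a_j|\htheta_j)) - \log(q_{j-1} f(a_j|\htheta_{j-1}))$, which is exactly (\ref{E:deriv}). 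The only technical points to verify are that $q_i$ and $m_i$ are $C^1$ in $a$ (from continuity of $r$) and that $\htheta_i$ is differentiable (from smoothness of $\mu^{-1}$), both guaranteed by the standing hypotheses.
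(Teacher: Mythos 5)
Your proof is correct, and its first half (the derivation of (\ref{E:I1_finitesum}) by splitting the integral over the $U_i$, substituting (\ref{E:exp_family}), and using (\ref{E:q}) and (\ref{E:htheta}) to collapse the pieces) is essentially identical to the paper's. For the differentiation step you take a genuinely different, though closely related, route. The paper differentiates its intermediate formula (\ref{E:I1_calc}) treating $q_i$ and $\htheta_i$ as the intermediate variables, which requires first computing $\partial \htheta_i / \partial a_k$ by implicit differentiation of $q_i \mu(\htheta_i) = \int_{U_i} x r(x)\, dx$ (this is its Lemma \ref{L:some_derivs}); the simplification then comes from observing that the coefficient multiplying $\partial \htheta_i / \partial a_j$, namely $q_i \psi^\prime(\htheta_i) - \int_{U_i} x r(x)\, dx$, vanishes because $\psi^\prime = \mu$. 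You instead reparameterize each summand by $(q_i, m_i)$ with $m_i \defeq \int_{U_i} x r(x)\, dx$, so that the only derivatives in the cut-points you need are Leibniz-rule derivatives of integrals with moving endpoints, and the same identity $m_i = q_i \mu(\htheta_i)$ enters as an envelope-type cancellation that kills the $\partial \htheta_i / \partial q_i$ and $\partial \htheta_i / \partial m_i$ terms before they ever need to be computed. The mathematical content of the cancellation is the same in both proofs, but your bookkeeping is more self-contained: you never need an explicit formula for any derivative of $\htheta_i$, whereas the paper's route runs through Lemma \ref{L:some_derivs}. What the paper's heavier machinery buys is reuse: the explicit formulas for $\partial \htheta_i / \partial a_k$ are needed again, with their actual values this time, in the proof of Lemma \ref{L:jacobian}, so the implicit-differentiation work is amortized across both lemmas rather than avoided. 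Your final assembly (the cancellation of the $+1$ terms, and the identity $a_j \htheta_i - \psi(\htheta_i) = \log f(a_j | \htheta_i) - \log h(a_j)$ recombining the bracket into the log-ratio) reproduces exactly the paper's equation (\ref{E:deriv2}) and its last step.
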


\begin{proof}
See Appendix \ref{S:proofs}.
\end{proof}

Note that the numerator and denominator in the logarithm of (\ref{E:deriv}) are, respectively, the limits of $q(x) f(x | \htheta(x))$ as $x$ approaches $a_j$ from above and below.  Therefore (\ref{E:deriv_simple}) is exactly the condition which ensures that $q(x) f(x | \htheta(x))$ is a continuous function of $x$ at $x=a_j$.  So even though $q(x)$ and $\htheta(x)$ are step functions, we have proved the following.

\begin{corol} \label{C:cont}
For the SMML estimator, $q(x) f(x | \htheta(x))$ is a continuous function of $x$.
\end{corol}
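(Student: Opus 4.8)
The plan is to show that $g(x) \defeq q(x) f(x \mid \htheta(x))$ is continuous on all of $\mathcal{X}$ by first reducing the problem to the cut-points and then killing the potential jumps there with the minimality condition and Lemma \ref{L:grad}. First I would observe that on each open interval $(a_i, a_{i+1})$ the function $g$ coincides with $x \mapsto q_i f(x \mid \htheta_i)$, which is continuous because $f(x \mid \theta) = \exp(x\theta - \psi(\theta)) h(x)$ is a continuous function of $x$ for each fixed $\theta$ (the exponential factor is continuous, and $h$ is continuous since $r$ is continuous and the prior integral $\int_\Theta \pi(\theta) \exp(x\theta - \psi(\theta)) \, d\theta$ is continuous and positive). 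Since $\mathcal{X}$ is open, its boundary points $a_0, a_{n+1}$ lie outside the domain, so $g$ is automatically continuous everywhere except possibly at the interior cut-points $a_1, \ldots, a_n$, and only these need to be checked.

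Next I would compute the one-sided limits at a cut-point $a_j$. Because $U_j = [a_j, a_{j+1})$ is closed on the left, the value $g(a_j)$ and the right-hand limit both equal $q_j f(a_j \mid \htheta_j)$, whereas for $x$ just below $a_j$ we have $x \in U_{j-1}$, so the left-hand limit equals $q_{j-1} f(a_j \mid \htheta_{j-1})$. As noted after Lemma \ref{L:grad}, these are precisely the numerator and denominator inside the logarithm of (\ref{E:deriv}). Hence continuity of $g$ at $a_j$ is equivalent to the single equality $q_j f(a_j \mid \htheta_j) = q_{j-1} f(a_j \mid \htheta_{j-1})$.

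To secure this equality I would invoke minimality. Since the SMML estimator corresponds to an interior minimum of $I_1$ on the open set $\An$, its gradient vanishes, so (\ref{E:deriv_simple}) gives $\partial I_1 / \partial a_j = 0$ for each $j$, and substituting (\ref{E:deriv}) yields
\begin{equation*}
r(a_j) \log\left( \frac{q_j f(a_j \mid \htheta_j)}{q_{j-1} f(a_j \mid \htheta_{j-1})} \right) = 0.
\end{equation*}
The crux — the step I expect to require the most care — is arguing that it is the logarithm, and not the prefactor $r(a_j)$, that must vanish. For this I would use strict positivity of $r$ on $\mathcal{X}$: since $f(x \mid \theta) > 0$ throughout $\mathcal{X}$ (because $h > 0$) and $\pi$ is positive on a set of positive measure, the integral defining $r(a_j)$ is strictly positive at the interior point $a_j$. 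Cancelling $r(a_j) > 0$ forces the logarithm to be zero, giving the equality of one-sided limits. Combining the interval-wise continuity with continuity at every $a_j$ then establishes that $g$ is continuous on all of $\mathcal{X}$.
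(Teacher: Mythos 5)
Your proposal is correct and follows essentially the same route as the paper: the paper likewise observes that the numerator and denominator in the logarithm of (\ref{E:deriv}) are the one-sided limits of $q(x) f(x \mid \htheta(x))$ at $a_j$, so the vanishing-gradient condition (\ref{E:deriv_simple}), together with $r(a_j) \neq 0$, forces these limits to agree. Your extra care in verifying interval-wise continuity and the strict positivity of $r$ simply makes explicit what the paper leaves implicit.
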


Now, let $G:\An \to \R^n$ be the function whose $j^{th}$ co-ordinate is given by
\begin{equation}\label{E:G}
    G_j(a) = \log \left(  \frac{q_j f(a_j | \htheta_j)}{q_{j-1} f(a_j | \htheta_{j-1})} \right)
\end{equation}
for any $a \in \An$.  By Lemma \ref{L:grad},
$$ \frac{\partial I_1}{\partial a_j} =  r(a_j) G_j(a),$$
so since $r(a_j)$ is never zero, solving the system of equations (\ref{E:deriv_simple}) is equivalent to the
simpler and numerically better-behaved problem of finding the zeroes of the function $G:\An \to \R^n$.
We will use Newton's method to find the zeroes of $G$ so the next lemma calculates the Jacobian matrix of $G$ and shows that it is sparse.

\begin{lemma}
\label{L:jacobian}
For $j,k = 1, \ldots, n$,
\begin{eqnarray*}
\frac{\partial G_j}{\partial a_k} &=& 0 \mbox{ if $|j-k| > 1$.} \\
\frac{\partial G_j}{\partial a_{j-1}} &=&
  \frac{r(a_{j-1})}{q_{j-1}} \left( 1 + \frac{(a_{j-1} - \mu(\htheta_{j-1}))(a_j - \mu(\htheta_{j-1}))} {\mu^\prime(\htheta_{j-1})} \right)
  \mbox{ if $j \not= 1$} \\
\frac{\partial G_j}{\partial a_j} &=& \htheta_j - \htheta_{j-1} -
  \frac{r(a_j)}{q_j} \left( 1 + \frac{(a_j - \mu(\htheta_j))^2} {\mu^\prime(\htheta_j)}\right) -
  \frac{r(a_j)}{q_{j-1}} \left( 1 + \frac{(a_j - \mu(\htheta_{j-1}))^2} {\mu^\prime(\htheta_{j-1})}\right) \\
\frac{\partial G_j}{\partial a_{j+1}} &=&
   \frac{r(a_{j+1})}{q_j} \left( 1 + \frac{(a_j - \mu(\htheta_j))(a_{j+1} - \mu(\htheta_j))} {\mu^\prime(\htheta_j)} \right) \mbox{ if $j \not= n$}.
\end{eqnarray*}
\end{lemma}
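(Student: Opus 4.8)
The plan is to reduce everything to two building blocks—the endpoint derivatives of $q_i$ and of $\htheta_i$—and then assemble the four cases by the chain rule. First I would record the sparsity, which comes for free from the structure of the problem. Since $U_i = [a_i, a_{i+1})$, both $q_i = \int_{U_i} r\,dx$ and $\htheta_i$ given by (\ref{E:htheta}) depend on the cut-points only through $a_i$ and $a_{i+1}$. The coordinate $G_j$ is built from $q_j, \htheta_j, q_{j-1}, \htheta_{j-1}$ together with the explicit appearance of $a_j$ inside $f(a_j \mid \cdot)$, so it depends only on $a_{j-1}, a_j, a_{j+1}$. Hence $\partial G_j / \partial a_k = 0$ whenever $|j-k| > 1$, which is the first claim. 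Next I would rewrite $G_j$ in a form where the chain rule is transparent: substituting (\ref{E:exp_family}) into (\ref{E:G}), the factors $h(a_j)$ cancel and
\[
G_j = \log q_j - \log q_{j-1} + a_j(\htheta_j - \htheta_{j-1}) - \psi(\htheta_j) + \psi(\htheta_{j-1}).
\]

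The building-block derivatives then follow from the fundamental theorem of calculus and implicit differentiation. Differentiating the integrals directly gives $\partial q_i/\partial a_i = -r(a_i)$ and $\partial q_i/\partial a_{i+1} = r(a_{i+1})$. For $\htheta_i$, I would differentiate the defining relation $q_i\,\mu(\htheta_i) = \int_{U_i} x\, r(x)\,dx$ implicitly, substitute the $q_i$-derivatives just found, and divide by $\mu^\prime(\htheta_i)$, obtaining
\[
\frac{\partial \htheta_i}{\partial a_i} = \frac{-r(a_i)\,(a_i - \mu(\htheta_i))}{q_i\,\mu^\prime(\htheta_i)}, \qquad
\frac{\partial \htheta_i}{\partial a_{i+1}} = \frac{r(a_{i+1})\,(a_{i+1} - \mu(\htheta_i))}{q_i\,\mu^\prime(\htheta_i)}.
\]
These are well defined because $\mu^\prime = \psi^{\prime\prime}$ is the variance of the family and hence strictly positive. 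The crucial observation for the assembly is that, since $\psi^\prime = \mu$, the contributions of $a_j\htheta_i$ and $-\psi(\htheta_i)$ to $\partial G_j/\partial a_k$ combine to leave the coefficient $(a_j - \mu(\htheta_i))$ multiplying $\partial\htheta_i/\partial a_k$; this cancellation is what produces the tidy form of the answer.

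Finally I would assemble the three nontrivial cases by substituting the building blocks into the chain-rule expansion of $G_j$. When $k = j-1$ only the index-$(j-1)$ quantities respond, through their lower endpoint; when $k = j+1$ only the index-$j$ quantities respond, through their upper endpoint; and when $k = j$ both index-$(j-1)$ and index-$j$ quantities respond (as the upper endpoint of $U_{j-1}$ and the lower endpoint of $U_j$ respectively), while the explicit $a_j$ supplies the term $\htheta_j - \htheta_{j-1}$. In each case the $(a_j - \mu(\htheta_i))$ coefficient multiplies the corresponding $\partial\htheta_i/\partial a_k$, and collecting the $1/q_i$ and $\htheta_i$-derivative terms over the common denominator $q_i$ yields the stated $(1 + \cdots)$ factors. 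No single step is deep; I expect the main obstacle to be bookkeeping—correctly matching each endpoint derivative to the interval it belongs to and tracking signs, since $a_j$ is simultaneously the right endpoint of $U_{j-1}$ and the left endpoint of $U_j$. The $k = j$ case warrants the most care, as two intervals contribute at once and both $\htheta_{j-1}$ and $\htheta_j$ vary with $a_j$.
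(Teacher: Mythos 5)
Your proposal is correct and follows essentially the same route as the paper: the rewriting of $G_j$ is exactly the paper's equation (\ref{E:G_simpler_again}), your two building-block derivatives are precisely the paper's Lemma \ref{L:some_derivs} (proved the same way, via the fundamental theorem of calculus and implicit differentiation of $q_i\,\mu(\htheta_i) = \int_{U_i} x\,r(x)\,dx$), and the chain-rule assembly with the $(a_j - \mu(\htheta_i))$ cancellation from $\psi^\prime = \mu$ is identical to the paper's computation.
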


\begin{proof}
See Appendix \ref{S:proofs}.
\end{proof}

Note that $\mu^\prime(\theta)$ is the variance of the distribution (\ref{E:exp_family}) for any $\theta \in \Theta$ (e.g. see Theorem 2.2.1 of \cite{kass}).

\begin{remark}
\label{R:global_min}
A global minimum of $I_1:\An \to \R$ is (the set of cut-points of) an SMML estimator, but solutions to the system of equations (\ref{E:deriv_simple}) are only critical points of $I_1$.  We can use Lemma \ref{L:jacobian} to check if a solution to (\ref{E:deriv_simple}) is a local minimum, but these might not be global minima.
\end{remark}

\begin{remark}
If an SMML estimator with $n$ cut-points $a^{(n)} \in \An$ exists for each $n$ then $I_1(a^{(n)})$ is a non-increasing function of $n$.  To see this, note that $I_1(a^{(n)}) \le I_1(a)$ for every $a \in \An$ since $a^{(n)}$ is a global minimum of $I_1:\An \to \R$.  Also, there exist $a \in \An$ with $I_1(a)$ arbitrarily close to $I_1(a^{(n-1)})$, e.g. take $a$ to be $a^{(n-1)}$ but with an extra cut-point close to one of the cut-points of $a^{(n-1)}$ and use (\ref{E:I1_finitesum}).  Therefore $I_1(a^{(n)}) \le I_1(a) = I_1(a^{(n-1)}) + \epsilon$ for every $\epsilon > 0$, so $I_1(a^{(n)}) \le I_1(a^{(n-1)})$.
\end{remark}

\begin{remark}
There is a one-to-one map between the set of possible cut-points $\An$ and the set of all $p \in \R^n$ with $0 < p_1 < \ldots < p_n < 1$, given by $p_i = R(a_i)$ where $R(x) = \int_\infty^x r(\xi) d\xi$ is the marginal cumulative distribution function.  So we can consider $a$ and hence $I_1$ to be a function of $p$ alone, in which case $\frac{\partial I_1}{\partial p_j} =  G_j$ by (\ref{E:deriv}) and the chain rule, since the Jacobian of the transformation $a \mapsto p$ is the diagonal matrix with entries $r(a_1), \ldots, r(a_n)$.  Parameterizing the cut-points in terms of $p$ has several advantages (e.g. $q_i$ is given by the simple formula $q_i = p_{i+1} - p_i$), but we will not pursue this parametrization here.
\end{remark}

\section{Normal data with known variance and a normal prior}
\label{S:normal}

We now apply the work of the previous section to a simple case.  Each set of cut-points $a$ in this section gives a local minimum of $I_1(a)$ but not necessarily a global minimum, so we will refer to these as `likely SMML estimators' to indicate that they are likely but not guaranteed to be SMML estimators (see Remark \ref{R:global_min}).

Let $\mathcal{X} = \Theta = \R$ and choose a normal prior on $\Theta$ with variance $\alpha^2$, i.e. $\theta \sim N(0,\alpha^2)$.  Let the data $X$ given $\theta$ be normally distributed with mean $\theta$ and variance $1$, i.e. $(X|\theta) \sim N(\theta,1)$.  For example, if $Y_1, \ldots, Y_m$ are independent and all distributed according to $N(\theta \frac{\sqrt{m}}{\sigma},\sigma^2)$, where $\sigma$ is known, then $(X|\theta) \defeq \frac{\sigma}{\sqrt{m}} \overline{Y}$ is a minimal sufficient statistic for $Y_1, \ldots, Y_m$ and $(X|\theta) \sim N(\theta,1)$.

The PDF of $X$ given $\theta$ is of the form (\ref{E:exp_family}) with $\psi(\theta) = \frac{1}{2} \theta^2$ and $h(x) = \frac{1}{\sqrt{2 \pi}} e^{-x^2 / 2}$.  As noted earlier, $\mu(\theta) = \psi^\prime(\theta)$, so $\mu$ is the identity map.  Also, it is not hard to show that the data $X$ (not conditioned on $\theta$) is distributed as $X \sim N(0, \beta^2)$ where $\beta = \sqrt{1+\alpha^2}$, so
$$ r(x) = \frac{1}{\beta \sqrt{2 \pi}} \exp\left(-\frac{x^2}{2 \beta^2}\right).$$

Table \ref{T:normalnormal} gives, for various numbers $n$ of cut-points, the non-negative cut-points $b_1, \ldots, b_k$ of the likely SMML estimator when $\alpha = 2$.  The bottom line of Table \ref{T:normalnormal} ($n=16$) corresponds to the `exact SMML' column in Table 3.2 on page 176 of Wallace \cite{wallace} and it agrees with this column except for Wallace's last entry, which he says is `not correct'.

Wallace generated his results using an unspecified iterative procedure which combined his boundary rules and (\ref{E:htheta}), even though he says these are `incompatible'.  Due to this incompatibility, it is maybe not surprising that the boundary rules are not satisfied for the likely SMML estimators given in Table \ref{T:normalnormal}, though this makes the close agreement between his results and ours even more surprising.  It is not clear what connections exist between the system of equations (\ref{E:deriv_simple}) and Wallace's boundary rules, but there does not seem to be a simple connection.

\begin{table}
\centering
\begin{tabular}{|l|l|llllllll|}
\hline
$n$ & $I_1-I_0$ & $b_1$ & $b_2$ & $b_3$ & $b_4$ & $b_5$ & $b_6$ & $b_7$ & $b_8$ \\
\hline
$1$ & $0.2968787967$ & $0.0000$ & $-$ & $-$ & $-$ & $-$ & $-$ & $-$ & $-$ \\
$2$ & $0.1848522963$ & $1.9740$ & $-$ & $-$ & $-$ & $-$ & $-$ & $-$ & $-$ \\
$3$ & $0.1756409558$ & $0.0000$ & $3.8977$ & $-$ & $-$ & $-$ & $-$ & $-$ & $-$ \\
$4$ & $0.1753131831$ & $1.9203$ & $5.9799$ & $-$ & $-$ & $-$ & $-$ & $-$ & $-$ \\
$5$ & $0.1753126143$ & $1.9044$ & $5.9619$ & $10.8610$ & $-$ & $-$ & $-$ & $-$ & $-$ \\
$6$ & $0.1753120750$ & $1.9203$ & $5.9797$ & $10.8840$ & $-$ & $-$ & $-$ & $-$ & $-$ \\
$7$ & $0.1753120750$ & $1.9203$ & $5.9797$ & $10.8840$ & $17.5442$ & $-$ & $-$ & $-$ & $-$ \\
$8$ & $0.1753120750$ & $1.9203$ & $5.9797$ & $10.8840$ & $17.5442$ & $-$ & $-$ & $-$ & $-$ \\
$9$ & $0.1753120750$ & $1.9203$ & $5.9797$ & $10.8840$ & $17.5442$ & $27.1130$ & $-$ & $-$ & $-$ \\
$10$ & $0.1753120750$ & $1.9203$ & $5.9797$ & $10.8840$ & $17.5442$ & $27.1130$ & $-$ & $-$ & $-$ \\
$11$ & $0.1753120750$ & $1.9203$ & $5.9797$ & $10.8840$ & $17.5442$ & $27.1130$ & $41.1964$ & $-$ & $-$ \\
$12$ & $0.1753120750$ & $1.9203$ & $5.9797$ & $10.8840$ & $17.5442$ & $27.1130$ & $41.1964$ & $-$ & $-$ \\
$13$ & $0.1753120750$ & $1.9203$ & $5.9797$ & $10.8840$ & $17.5442$ & $27.1130$ & $41.1964$ & $62.1447$ & $-$ \\
$14$ & $0.1753120750$ & $1.9203$ & $5.9797$ & $10.8840$ & $17.5442$ & $27.1130$ & $41.1964$ & $62.1447$ & $-$ \\
$15$ & $0.1753120750$ & $1.9203$ & $5.9797$ & $10.8840$ & $17.5442$ & $27.1130$ & $41.1964$ & $62.1447$ & $93.4500$ \\
$16$ & $0.1753120750$ & $1.9203$ & $5.9797$ & $10.8840$ & $17.5442$ & $27.1130$ & $41.1964$ & $62.1447$ & $93.4500$ \\
\hline
\end{tabular}
\caption{For various numbers $n$ of cut-points, the difference $I_1-I_0$ in expected code-lengths of the one- and two-part codes as well as the non-negative cut-points $b_1, \ldots, b_k$ of the likely SMML estimator. }
\label{T:normalnormal}
\end{table}

The SMML estimator seems to be unique and symmetric about $0$ when $n$ is $1$ or $3$ or $n$ is even, so Table \ref{T:normalnormal} determines the likely SMML estimator in these cases, e.g. $a = (-b_k, \ldots, -b_1, b_1, \ldots, b_k)$ if $n = 2k$.  For odd $n \ge 5$ there are two likely SMML estimators, e.g. if $n=5$ the two estimators have cut-points
$$a = (-5.9978, -1.9362, 1.9044, 5.9619, 10.8610)$$
and
$$a = (-10.8610, -5.9619, -1.9044, 1.9362, 5.9978).$$
For odd $n \ge 7$, each negative cut-point is minus one of the positive cut-points (to four decimal places), e.g. when $n=7$ the cut-points are
$$a = (-10.8840,-5.9797,-1.9203,1.9203,5.9797,10.8840,17.5442),$$
or the negative of this in the reverse order.

Table \ref{T:normalnormal} also gives the difference $I_1-I_0$ in expected code-lengths of the one- and two-part codes, where $I_0 = -\int_\mathcal{X} r(x) \log r(x) dx$.  Note that increasing the number of cut-points beyond $n=6$ improves the expected code-length by less than $10^{-10}$, so $n=6$ cut-points are probably sufficient for most practical applications.  Also note that, to four decimal places, the set of cut-points of each likely SMML estimator with $6 \le n \le 16$ is just a subset of the cut-points for the likely SMML estimator when $n=16$.  This is probably due to the fact that more than $n=6$ cut-points makes very little difference to $I_1$ and hence has little impact on the placement of the existing cut-points.

Of theoretical interest, there is a local minimum of $I_1$ at $a = (-5.9978, -1.9362,  1.9044,  5.9619, 10.8610, 17.5118)$ which is not a global minimum, since $I_1 - I_0 = 0.1753126143$ for this set of cut-points and this is is larger than $I_1- I_0$ for the cut-points given in Table \ref{T:normalnormal} for $n=6$.  So this is a counter-example to idea that all local minima of $I_1$ correspond to SMML estimators.

Figure \ref{F:even} shows the cut-points and the graphs of $D_0(x)$, $D_1(x)$ and $r(x)$ corresponding to the likely SMML estimator when $n=6$, where $D_0(x) = -r(x) \log r(x)$ and $D_1(x) = -r(x) \log(q(x) f(x| \htheta(x) ))$ so that $I_0 = \int_\mathcal{X} D_0(x) dx$ and $I_1 = \int_\mathcal{X} D_1(x) dx$.  Note that the continuity of $D_1(x)$, which is guaranteed by Corollary \ref{C:cont}, is consistent with this figure.

\begin{figure}
  \centering
  \includegraphics[width=10cm]{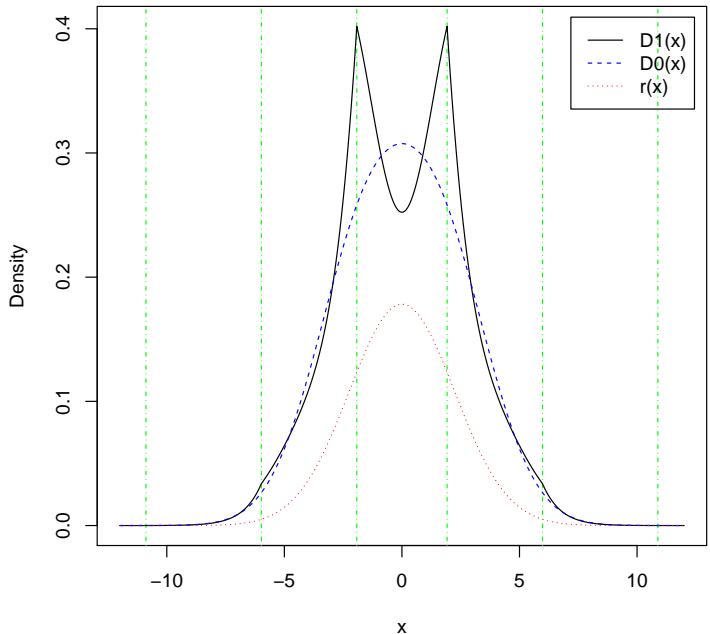} \\
  \caption{For the likely SMML estimator with $6$ cut-points, the graphs of $D_0(x)$ (dashed line), $D_1(x)$ (solid line) and $r(x)$ (dotted line) and the corresponding cut-points (vertical lines).}
  \label{F:even}
\end{figure}

\section{Exponential data with a gamma prior}
\label{S:exp}

In this section, we apply the results of Section \ref{S:construction} to exponential data with a gamma prior.

For exponential data, $f(x \mid \theta)$ is of the form (\ref{E:exp_family}) if we choose $\mathcal{X} = [0,\infty)$,  $\Theta = (-\infty, 0)$, $\psi(\theta) = -\log(-\theta)$ and $h(x) = 1$ (though note that exponential distributions are usually parameterized in terms of the rate $-\theta$).  Hence the corresponding mean is $\mu(\theta) = \psi^\prime(\theta) = -1/\theta$.  Choose a gamma prior for $-\theta$ with shape and rate parameters $\alpha>0$ and $\beta > 0$ (respectively) so that
$$ \pi(\theta) = \frac{\beta^\alpha}{\Gamma(\alpha)} |\theta|^{\alpha -1} e^{\beta |\theta|}. $$
Then the data $X$ has marginal PDF
$$ r(x) = \frac{\alpha}{\beta} \left( 1+ \frac{x}{\beta} \right)^{-\alpha - 1} ,$$
i.e. $X$ has a Lomax distribution (equivalently, $1+X/\beta$ has a Pareto distribution).  In order to satisfy our technical condition that the first moment of $r$ should exist we need to additionally assume that $\alpha > 1$.  Note that for exponential data with an exponential prior ($\alpha=1$), the expectation defining $I_1$ in (\ref{E:I1}) does not in general exist (see (\ref{E:I1_calc})), so the SMML estimator is not defined in this case.

Table \ref{T:expgamma} gives the cut-points for the likely SMML estimator when $\alpha = 2$ and $\beta = 1$.  In contrast to the normal-normal case of Section \ref{S:normal}, for exponential data and a gamma prior it seems that the SMML estimator is unique and that all local minima of $I_1(a)$ are global minima.

\begin{table}
\centering
\begin{tabular}{|l|l|lllll|}
\hline
$n$ & $I_1-I_0$ & $a_1$ & $a_2$ & $a_3$ & $a_4$ & $a_5$ \\
\hline
$1$ & $0.0589128612$ & $4.49$ & $-$ & $-$ & $-$ & $-$ \\
$2$ & $0.0579045079$ & $4.42$ & $80.55$ & $-$ & $-$ & $-$ \\
$3$ & $0.0579008163$ & $4.42$ & $80.17$ & $1380.63$ & $-$ & $-$ \\
$4$ & $0.0579008036$ & $4.42$ & $80.17$ & $1374.66$ & $23597.96$ & $-$ \\
$5$ & $0.0579008036$ & $4.42$ & $80.17$ & $1374.64$ & $23496.46$ & $403274.23$ \\
\hline
\end{tabular}
\caption{For various numbers $n$ of cut-points, the difference $I_1-I_0$ in expected code-lengths of the one- and two-part codes as well as the cut-points $a_1, \ldots, a_n$ of the likely SMML estimator. }
\label{T:expgamma}
\end{table}

\section{Numerical considerations}
\label{S:numerical}

To construct the SMML estimator we might have to consider cut-points $a_1, \ldots, a_n$ which are far outside the likely range of the data, so some of the corresponding values of $r(a_j)$ and $q_j$ might be extremely small, smaller even than machine precision.  This section briefly discusses some simple and effective solutions to the numerical problems that this causes.

By (\ref{E:exp_family}), the $j^{th}$ co-ordinate of $G:\An \to \R^n$ is given by
\begin{equation}\label{E:G_simpler}
G_j(a) = \left(  \log q_j  + a_j \htheta_j - \psi(\htheta_j) \right) -
\left(  \log q_{j-1} + a_j \htheta_{j-1} - \psi(\htheta_{j-1}) \right)
\end{equation}
for any $a \in \An$.  For any $c \in \mathcal{X}$, let $\tilde{r}_c(x) \defeq r(x)/r(c)$, so by (\ref{E:q}) we have
$$ \log q_j = \log\left( \int_{U_j} r(x)dx \right) = \log\left( r(c) \int_{U_j} \tilde{r}_c(x)dx \right)
= \log r(c) + \log\left(\int_{U_j} \tilde{r}_c(x)dx \right).$$
By choosing $c$ appropriately, all terms in the right hand side of this expression can be calculated numerically to a high degree of precision (for many functions $r(x)$).  For example, with $r(x)$ as in Section \ref{S:normal}, $\tilde{r}_c(x) = \exp(-(x^2-c^2)/2\beta^2)$, so taking $c=a_j$ we have
$$ \log q_j = -\frac{a_j^2}{2 \beta^2} - \frac{1}{2}\log(2 \pi \beta^2) + \log\left( \int_{a_j}^{a_{j+1}} \exp\left[{\frac{-(x^2-a_j^2)}{2\beta^2}}\right]dx \right)$$
which is numerically well-behaved even for large $a_j$.  Also, by (\ref{E:htheta}) we have
$$ \htheta_j = \mu^{-1}\left( \frac{1}{q_j} \int_{U_j} x r(x) dx \right) =  \mu^{-1}\left( \frac{\int_{U_j} x r(x) dx}{\int_{U_j} r(x) dx} \right) =  \mu^{-1}\left( \frac{\int_{U_j} x \tilde{r}_c(x) dx}{\int_{U_j} \tilde{r}_c(x) dx} \right)$$
and the right-hand side is again numerically well-behaved for some choice of $c$, e.g. with $r(x)$ as in Section \ref{S:normal} we could take $c=a_j$ if $a_j>0$ and $c=a_{j+1}$ if $a_j<0$.  This shows that high-precision numerical calculations of $I_1$ and $G$ are possible, even when $q_j$ is smaller than machine precision.

We also note that $r(a_i)$ and $q_j$ only appear in Lemma \ref{L:jacobian} as ratios of each other.  We can calculate $r(a_{j+1})/q_j$ by evaluating the right-hand side of
$$ \frac{r(a_{j+1})}{q_j} = \frac{r(a_{j+1})}{\int_{U_j} r(x) dx} =
\frac{\tilde{r}_c(a_{j+1})}{\int_{U_j} \tilde{r}_c(x) dx}$$
and this is numerically well-behaved for appropriate $c$.  Other ratios $r(a_i)/q_j$ can be calculated similarly so  the Jacobian matrix of $G$ can also be calculated numerically.

\section{Conclusions and extensions}
\label{S:conclusion}

In the context of 1-dimensional exponential families with continuous sufficient statistics, we have found equations that the cut-points of the SMML estimator must satisfy.  As a corollary, we proved that the posterior probability $q(x) f(x | \htheta(x))$ corresponding to the SMML estimator is a continuous function of $x$, despite being composed of step functions.  We also solved these equations for a particular example using Newton's method.  Our approach is very simple but it solves an outstanding problem in information theory which previously could only be attempted with rules of thumb like Wallace's boundary rules.

Focussing on the case of continuous data allowed us to use calculus to solve the optimization problem defining the SMML estimator.  Restricting to $1$-dimensional data allowed us to assume a particular form (intervals) for the shape of the regions defining the SMML estimator.  Therefore our results probably generalize fairly easily to non-exponential families with $1$-dimensional sufficient statistics.  It is also possible that they will generalize to higher-dimensional continuous data, if the regions which define the SMML estimator are assumed to be convex polygons (or any other shapes whose configuration space is a manifold).

Many questions about SMML estimators for continuous data remain unanswered, even in the simple, $1$-dimensional case considered here.  Does an SMML estimator with a given number of cut-points always exist?  Is the SMML estimator with a given number of cut-points unique for positive data?  Does the system of equations (\ref{E:deriv_simple}) have a finite number of solutions?  If the data is restricted to a compact (i.e. finite and closed) interval then is there an upper bound to the number of cut-points that an SMML estimator can have?

An affirmative answer to the last two questions would open the possibility of developing a rigorous algorithm to find all SMML estimators with a given number of cut-points (by finding all solutions to (\ref{E:deriv_simple}) and outputting those with the lowest $I_1$) and a continuous analogue of Farr's algorithm \cite{farr} for positive data.

\appendix
\section{Proofs of technical lemmas}
\label{S:proofs}

This appendix contains the proofs of our main technical lemmas.  We begin with a calculation which will be used in both proofs.

\begin{lemma} \label{L:some_derivs}
Let $q_i$ and $\htheta_i$ be the functions of the cut-points $a$ given by (\ref{E:q}) and (\ref{E:htheta}). Then for $i = 0,1, \ldots, n$ and $k = 1, \ldots,n$,
$$\frac{\partial q_i}{\partial a_k} = \epsilon \, r(a_k) $$
and
$$\frac{\partial \htheta_i}{\partial a_k} =
\epsilon \frac{r(a_k)}{q_i \mu^\prime(\htheta_i)} \left(a_k - \mu(\htheta_i)\right)
$$
where
$$ \epsilon =
\left\{
  \begin{array}{ll}
    -1 & \mbox{if $k = i$;} \\
    1 & \mbox{if $k = i+1$;} \\
    0 & \mbox{otherwise.}
  \end{array}
\right.
$$
\end{lemma}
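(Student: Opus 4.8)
The plan is to compute both families of partial derivatives directly, treating $q_i$ and $\htheta_i$ as explicit functions of the two endpoints of the interval $U_i = [a_i, a_{i+1})$ and differentiating the defining integrals with respect to their limits. Since $k$ ranges only over the finite cut-points $a_1, \ldots, a_n$, the interval $U_i$ depends on $a_k$ precisely when $k = i$ (through its lower limit) or $k = i+1$ (through its upper limit), and is otherwise constant in $a_k$. This dichotomy, together with the sign of the resulting boundary term, is exactly what the factor $\epsilon$ records, so in the generic case $k \notin \{i, i+1\}$ both derivatives vanish and the formulas hold trivially.

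For the first formula I would apply the fundamental theorem of calculus to $q_i = \int_{a_i}^{a_{i+1}} r(x)\,dx$. As $r$ is continuous, differentiating with respect to the upper limit contributes $r(a_{i+1})$ and with respect to the lower limit contributes $-r(a_i)$, which is $\epsilon\, r(a_k)$ in the two relevant cases; this gives $\partial q_i / \partial a_k = \epsilon\, r(a_k)$ at once.

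For the second formula I would write $\htheta_i = \mu^{-1}(m_i)$, where $m_i \defeq S_i / q_i$ is the centre of mass of $U_i$ and $S_i \defeq \int_{a_i}^{a_{i+1}} x\, r(x)\,dx$. The same fundamental-theorem argument gives $\partial S_i / \partial a_k = \epsilon\, a_k\, r(a_k)$, whence the quotient rule and the first formula yield
$$\frac{\partial m_i}{\partial a_k} = \frac{1}{q_i}\left( \frac{\partial S_i}{\partial a_k} - m_i \frac{\partial q_i}{\partial a_k} \right) = \frac{\epsilon\, r(a_k)}{q_i}\,(a_k - m_i).$$
I would then apply the chain rule using $(\mu^{-1})^\prime(y) = 1/\mu^\prime(\mu^{-1}(y))$, which is valid because $\mu$ has a smooth inverse (a fact recalled in Section \ref{S:review}), together with the identity $m_i = \mu(\htheta_i)$ that is immediate from $\htheta_i = \mu^{-1}(m_i)$. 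Substituting converts the factor into $1/\mu^\prime(\htheta_i)$ and replaces $(a_k - m_i)$ by $(a_k - \mu(\htheta_i))$, giving the claimed expression.

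The computation is otherwise routine, so the only genuine subtlety is the justification of the fundamental theorem for the two semi-infinite boundary intervals $U_0 = [a_0, a_1)$ and $U_n = [a_n, a_{n+1})$, whose outer endpoints may be $\pm\infty$. This is where I expect the main (if minor) obstacle to lie, and it is resolved by two observations: differentiation is never performed with respect to the infinite endpoints (they are not among $a_1, \ldots, a_n$), and the standing hypothesis that the first moment of $r$ exists guarantees that $S_0$ and $S_n$ converge, so the integrals define differentiable functions of the finite endpoint and the fundamental theorem applies there without modification.
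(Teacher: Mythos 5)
Your proposal is correct and follows essentially the same route as the paper: both differentiate the defining integrals via the fundamental theorem of calculus and then transfer the result to $\htheta_i$ using the smoothness of $\mu$ and its inverse. The only cosmetic difference is that the paper differentiates the identity $q_i\,\mu(\htheta_i) = M(a_{i+1}) - M(a_i)$ implicitly (where $M(x) = \int_{-\infty}^x \xi\, r(\xi)\, d\xi$) and solves for $\partial \htheta_i / \partial a_k$, whereas you differentiate the centre of mass $m_i = S_i/q_i$ explicitly by the quotient rule and then apply the inverse-function chain rule --- the two computations are algebraically identical.
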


\begin{proof}
Let $R$ be the marginal cumulative distribution function of the data given by $R(x) = \int_{-\infty}^x r(\xi) d\xi$ for any $x \in \R$.  Then by (\ref{E:q}), $q_i = R(a_{i+1}) - R(a_i)$ so $\frac{\partial q_i}{\partial a_i} = -r(a_i)$, $\frac{\partial q_i}{\partial a_{i+1}} = r(a_{i+1})$ and $\frac{\partial q_i}{\partial a_k} = 0$ unless $k = i, i+1$.

Now, let $M(x) = \int_{-\infty}^x \xi r(\xi) d\xi$ for any $x \in \R$ so that $q_i \mu(\htheta_i) = M(a_{i+1}) - M(a_i)$ by (\ref{E:htheta}).  Differentiating this equation with respect to $a_i$ gives
$$ -r(a_i) \mu(\htheta_i) + q_i \mu^\prime(\htheta_i) \frac{\partial \htheta_i}{\partial a_i}
=  -a_i r(a_i) $$
so by rearranging we have
$$\frac{\partial \htheta_i}{\partial a_i} = \frac{r(a_i)}{q_i \mu^\prime(\htheta_i)} \left(\mu(\htheta_i) - a_i\right).$$
The cases $k=i+1$ and $k \not=i,i+1$ can be handled similarly.
\end{proof}

\begin{proof}[Proof of Lemma \ref{L:grad}]
As in the statement, let $C$ be the constant $- \int_\mathcal{X} r(x) \log h(x) dx$.  From (\ref{E:I1}) we have
\begin{eqnarray}
I_1 &=&  - \int_\mathcal{X} r(x) \log(q(x) f(x| \htheta(x) )) dx \nonumber \\
&=& - \sum_{i=0}^n  \int_{U_i} r(x) \log(q(x) f(x| \htheta(x) )) dx  \nonumber \\
&=& - \sum_{i=0}^n  \int_{U_i} r(x) \log(q_i f(x| \htheta_i )) dx  \nonumber \\
&=& C - \sum_{i=0}^n  \int_{U_i} r(x) [ \log q_i +  x \htheta_i - \psi(\htheta_i)] dx \mbox{ by (\ref{E:exp_family})} \nonumber \\
&=& C + \sum_{i=0}^n  \left( - q_i \log q_i  + q_i \psi(\htheta_i) - \htheta_i \int_{U_i} x r(x) dx \right) \label{E:I1_calc}
\end{eqnarray}
and (\ref{E:I1_finitesum}) follows by (\ref{E:htheta}).  Then by (\ref{E:I1_calc}) and Lemma \ref{L:some_derivs},
\begin{eqnarray*}
\frac{\partial I_1}{\partial a_j} &=&  \sum_{i=j-1}^j \frac{\partial ~}{\partial a_j} \left( - q_i \log q_i  + q_i \psi(\htheta_i) - \htheta_i \int_{U_i} x r(x) dx \right) \\
&=&  r(a_j) \left(  - \log q_{j-1} - 1 + \psi(\htheta_{j-1}) - a_j \htheta_{j-1} \right) +
     \frac{\partial \htheta_{j-1}}{\partial a_j}
     \left(  q_{j-1} \psi^\prime(\htheta_{j-1}) - \int_{U_{j-1}} x r(x) dx \right)  \\
&&  + r(a_j) \left(  \log q_j + 1 - \psi(\htheta_j) + a_j \htheta_j \right) +
     \frac{\partial \htheta_j}{\partial a_j}
     \left(  q_j \psi^\prime(\htheta_j) - \int_{U_j} x r(x) dx \right).
\end{eqnarray*}
Now, $\int_{U_j} x r(x) dx = q_j \mu(\htheta_j)$ by (\ref{E:htheta}) and $\psi^\prime(\theta) = \mu(\theta)$ by Theorem 2.2.1 of \cite{kass}.  Therefore the bracketed expression multiplying $\frac{\partial \htheta_j}{\partial a_j}$ vanishes, as does the expression multiplying $\frac{\partial \htheta_{j-1}}{\partial a_j}$, so we have
\begin{equation}\label{E:deriv2}
\frac{\partial I_1}{\partial a_j} =
r(a_j) \left(  \log q_j  + a_j \htheta_j - \psi(\htheta_j) \right) -
r(a_j) \left(  \log q_{j-1} + a_j \htheta_{j-1} - \psi(\htheta_{j-1}) \right)
\end{equation}
and the lemma follows from (\ref{E:exp_family}).
\end{proof}

\begin{proof}[Proof of Lemma \ref{L:jacobian}]
By (\ref{E:exp_family}) and (\ref{E:G}),
\begin{equation}\label{E:G_simpler_again}
G_j(a) = \left(  \log q_j  + a_j \htheta_j - \psi(\htheta_j) \right) -
\left(  \log q_{j-1} + a_j \htheta_{j-1} - \psi(\htheta_{j-1}) \right)
\end{equation}
for any $a \in \R^n$.  So $\frac{\partial G_j}{\partial a_k} = 0$ whenever $|j-k| > 1$ by Lemma \ref{L:some_derivs}.

For the rest of the lemma, just differentiate (\ref{E:G_simpler_again}), use the fact that $\mu = \psi^\prime$ and apply Lemma \ref{L:some_derivs}.  For example, if $j \not= n$ then
\begin{eqnarray*}
\frac{\partial G_j}{\partial a_{j+1}}
&=&  \left(  \frac{1}{q_j} \frac{\partial q_j}{\partial a_{j+1}}
     + \left(a_j - \mu(\htheta_j)\right) \frac{\partial \htheta_j}{\partial a_{j+1}}  \right) -
     \left(  \frac{1}{q_{j-1}} \frac{\partial q_{j-1}}{\partial a_{j+1}}
     + \left(a_j - \mu(\htheta_{j-1})\right) \frac{\partial \htheta_{j-1}}{\partial a_{j+1}}  \right) \\
&=&  \frac{r(a_{j+1})}{q_j} + \left(a_j - \mu(\htheta_j)\right) \frac{r(a_{j+1})}{q_j \mu^\prime(\htheta_j)}
     \left(a_{j+1} - \mu(\htheta_j) \right) \mbox{ by Lemma \ref{L:some_derivs}}  \\
&=&  \frac{r(a_{j+1})}{q_j} \left( 1 + \frac{\left(a_j - \mu(\htheta_j)\right)\left(a_{j+1} - \mu(\htheta_j)\right)} {\mu^\prime(\htheta_j)} \right).
\end{eqnarray*}
\end{proof}

\section*{Acknowledgment}

The author would like to thank Enes Makalic and Daniel F. Schmidt for their many helpful comments on this manuscript and its earlier versions.

\end{document}